\documentclass[a4paper,USenglish]{lipics-v2018}

\usepackage{microtype}


\bibliographystyle{plainurl}

\title{Fast DecreaseKey Heaps with worst-case variants}


\author{Vladan Majerech}{Department of Theoretical Computer Science and Mathematical Logic (KTIML), Charles University, Malostransk\'e n\'am\v est\'\i\ 25, Prague 118 00, Czech Republic}{maj@ktiml.mff.cuni.cz}{ https://orcid.org/0000-0003-3006-2002}{}

\authorrunning{V. Majerech}

\Copyright{Vladan Majerech}

\subjclass{Information systems$\rightarrow$Information storage systems$\rightarrow$Record storage systems$\rightarrow$Record storage alternatives$\rightarrow$Heap (data structure)}

\keywords{Heaps, Amortized analysis, Worst-case analysis}

\category{}

\relatedversion{}

\supplement{}

\funding{}

\acknowledgements{}

\EventEditors{}
\EventNoEds{1}
\EventLongTitle{}
\EventShortTitle{}
\EventAcronym{}
\EventYear{2020}
\EventDate{}
\EventLocation{}
\EventLogo{}
\SeriesVolume{}
\ArticleNo{1}
\nolinenumbers 
\hideLIPIcs  

\begin{document}
\newcommand{\prevfootnotemark}{\csname @footnotemark\endcsname}
\newcommand{\MakeHeap}{{\bf MakeHeap\/}}
\newcommand{\FindMin}{{\bf FindMin\/}}
\newcommand{\DeleteMin}{{\bf DeleteMin\/}}
\newcommand{\ExtractMin}{{\bf ExtractMin\/}}
\newcommand{\Delete}{{\bf Delete\/}}
\newcommand{\Meld}{{\bf Meld\/}}
\newcommand{\Cut}{{\bf Cut\/}}
\newcommand{\Insert}{{\bf Insert\/}}
\newcommand{\Decrement}{{\bf DecreaseKey\/}}
\newcommand{\Right}{\hbox{\sl right\/}}
\newcommand{\Left}{\hbox{\sl left\/}}
\newcommand{\nill}{\hbox{\sl null\/}}
\newcommand{\Parent}{{\sl parent\/}}
\newcommand{\N}{{\tt N}}
\newcommand{\A}{{\tt A}}
\newcommand{\LL}{{\tt L}}
\newcommand{\Li}{{\tt L}_1}
\newcommand{\Lii}{{\tt L}_2}
\newcommand{\numfootnote}{\footnote}
\global\advance\dimen253 by 5cm
\maketitle

\begin{abstract}
In the paper \cite{ffhwce}, we have described heaps with both \Meld-\Decrement\ and \Decrement\ interfaces, allowing operations with guaranteed worst-case asymptotically optimal times. The paper was intended to concentrate on the \Decrement\ interface, but it could be hard to separate the two described data structures without careful reading. The current paper's goal is not to invent a novel data structure, but to describe a rather easy \Decrement\ version in a hopefully readable form. The paper is intended not to require reference to other papers.
\end{abstract}

\section{Introduction}
Heap is a data structure maintaining a set of elements with keys where keys belong to a linearly ordered universe.
Minimal heap interface supports \Insert\ and \ExtractMin\ methods. \Insert\ adds an element to the structure, \ExtractMin\ returns, and removes the element with the smallest key from the structure. 
In case the structure is empty, \ExtractMin\ returns \nill. 
In the case, there are more elements with the minimal key in the heap, one of them is chosen.

As we cannot distinguish $n!$ possible permutations using just comparisons faster than by $\Omega(\log n!)$, at least one of the operations requires $\Omega(\log n)$ time (we could sort using $n$ {\Insert}s and {\ExtractMin}s).
As for nonempty heap, \ExtractMin\ cannot be used often than \Insert, the interface with $O(1)$ running time for \Insert, and $O(\log n)$ for \ExtractMin, where $n$ denotes the current heap size, is asymptotically optimal.

It could be advantageous to introduce the \FindMin\ method, which returns the minimal heap element and it remains in the heap. 
\ExtractMin\ used to be renamed to \DeleteMin, and it calls \FindMin\ internally.
Optimal interface declares $O(1)$ for \Insert\ and \FindMin, and $O(\log n)$ for \DeleteMin.

\Decrement\ heap interface introduces \Decrement\ method, which allows a decrease of the key of a pointed heap element.
To allow pointing, the \Insert\ method should return a pointer to the element to be used for future references. 
Optimal \Decrement\ heap interface declares $O(1)$ for \Insert, \FindMin\ and \Decrement, and $O(\log n)$ for \DeleteMin.

A lot of implementations of the asymptotically optimal amortized \Decrement\ heap interface are known for a long time.
Recently, several implementations of the asymptotically optimal worst-case \Decrement\ heap interface were published.
In this paper, we present a relatively simple heap with the asymptotically optimal worst-case \Decrement\ heap interface.

\section{Overview}
We have an additional requirement that all keys should be different.
This could be achieved by lexicographical comparison with an added second coordinate. Usually using the binary representation of the pointer to the node as the second coordinate would work.

The base of our heap will be a tree of logarithmic arity which is maintained heap ordered, which means the parent key is always smaller than the child key.
Temporarily during the execution of a heap method, the heap could consist of a list of such trees. 
\FindMin\ method's goal is to convert the forest to one tree and return the root.

The fact that the arity is at most logarithmic is important for \DeleteMin\ method as all root children become a list of heap trees and internal \FindMin\ require time proportional to the number of trees.
To maintain the arity in bounds, we should carefully pick elements to compare. 
This is why the ranks of the nodes are introduced.
Nodes are created with rank 0.
Whenever two nodes of the same rank are compared, an edge is added such that the node with a smaller key becomes the parent of the other and the rank of the parent is incremented.
Most of the comparisons we would make are between nodes of the same rank.
Edges corresponding to such comparisons would be called rank edges.
Unmaintanable ideal heap would have only rank edges. 
In reality, it must be violated.
There will be nonrank edges in the tree created by a comparison of nodes of different ranks.
We call children connected by nonrank edges nonrank roots, the root is considered a nonrank root as well.

Another violation source is the \Decrement. To support it, some nodes could lose some of their children. 
To keep the ranks in logarithmic bounds, we should be careful with such losses. 
We define a loss for a node. Rank roots have loss 0 by definition. 
When a rank edge has been created, the loss of the child is 0. 
Whenever a node loses a rank child, its rank is decremented.
Whenever a node other than a nonrank root has rank decremented, its loss is incremented. 

We introduce a strategy that keeps both the loss violation size (sum of all losses) and the rank-root violation size (number of nonrank roots) logarithmically bounded.
To maintain the violation within limits, we maintain pointers to affected nodes in arrays addressed by their ranks.
Whenever we are ready to insert a second pointer to the same place, we have two nodes of the same rank ready to make a violation reduction.

It could happen, the array becomes almost full, and an invocation of a method could allow a big reduction of the violations even when the worst-case bound does not allow us to perform all allowed reductions.
This is why we maintain stacks of pending pointer insertions and we have to define a strategy of how to stop processing the stacks.
The amortized strategy would just empty the stacks.
We would propose two worst-case strategies, both would maintain both types of violations of at least maximal rank plus one (corresponding to empty stacks and maximally filled arrays).

We will show that the ranks are at most logarithmic in chapter 3. 
Description of violation reduction would follow in chapter 4, the stack reduction strategy analysis will start there and end in chapter 8. Chapters 5, 6, and 7 would concentrate on the implementation details.

\section{Maximal rank}
Let us study the relationship between the number of nodes $n$, the maximal rank $R$, and the maximal total loss $L$.
Let us try to find the minimal possible number of nodes $n(R, L)$, for given $R$ and $L$. 
As we would maintain $L\le R+1$, we are especially interested on $n(R,R+1)$.
The inverse of this function will give us the maximal possible rank $R(n)$ for a heap of $n$ nodes.
A binomial tree with a root of rank $r$ is a tree obtained by adding an edge between the roots of two binomial trees of rank $r-1$.
Such a tree has $2^r$ nodes and could be depicted as a root having roots of binomial trees of ranks $0$ up to $r-1$ as children.
All edges of a binomial tree are rank edges.
Let us call a binomial tree of rank $r$ with losses any tree which could be obtained from the binomial tree of rank $r$ by cutting edges where none cuts a root child and cutting a rank child of a successor is accompanied with the rank decrement of the successor and the loss increase there (there need not be cut at all).

\begin{lemma}
During a history of the heap for any node $x$ of rank $r$ holds, that there is a subtree $B_x$ rooted at $x$ of the full subtree $T_x$ rooted at $x$ such that $B_x$ is a binomial tree of rank $r$ with losses.
\end{lemma}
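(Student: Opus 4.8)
The plan is to prove this by induction on the history of the heap, tracking how the subtree rooted at each node evolves under the heap operations.

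Concretely, I would maintain the invariant that every node $x$ of current rank $r$ admits a binomial tree of rank $r$ with losses $B_x\subseteq T_x$ rooted at $x$, and verify that this invariant survives each atomic structural event: creation of a node (rank $0$), addition of a rank edge, addition of a nonrank edge, and a cut performed by \Decrement. The base case is immediate, since a freshly created node has rank $0$ and the single-node tree is a binomial tree of rank $0$ with no losses. For each later event I would update, or simply re-exhibit, a valid $B_v$ for every affected node, using the inductive hypothesis for the unaffected ones.

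The two additive cases should be routine. When a rank edge joins two nodes $x,y$ of equal rank $r$ and makes $x$ (the smaller key) the parent, the tree $B_x\cup B_y$ together with the new edge $x\to y$ is, by the recursive description of binomial trees, a binomial tree of rank $r+1$; the joining edge is a rank edge and no root child of the result has been cut, so it is a binomial tree of rank $r+1$ with losses. A nonrank edge changes no ranks and leaves every relevant subtree intact, so the previously chosen trees still serve; and a node whose subtree merely grows keeps its old $B_x$ because $B_x\subseteq T_x$ is preserved.

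The hard case is a cut that severs the edge from $z$ to its parent $p$, removing the subtree $T_z$. If $z$ hangs from $p$ by a nonrank edge, then $T_z$ meets no binomial-with-losses tree, since those are built from rank edges only, and nothing is disturbed. The interesting situation is a rank child: then $p$'s rank falls from $r$ to $r-1$ and a loss is charged at $p$. For any \emph{proper} ancestor $w$ of $p$ whose $B_w$ happened to use nodes of $T_z$, I would observe that $p$ is a successor (non-root) of $B_w$, so deleting $T_z$ is exactly a permitted loss-cut, namely cutting a rank child of a successor with the accompanying rank decrement and loss increment recorded at $p$ inside $B_w$; hence $B_w$ remains a binomial tree with losses of the \emph{same} rank, which is what is needed since $w$'s rank is unchanged.

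What remains, and is the main obstacle, is to re-exhibit a binomial tree of rank $r-1$ with losses rooted at $p$ itself. By the inductive hypothesis the old $B_p$ has root children of ranks $0,1,\dots,r-1$. If the lost child is the top one, or lies outside $B_p$ altogether, I simply drop the rank-$(r-1)$ child's subtree and obtain rank $r-1$. If instead the severed child occupies position $i<r-1$, a gap is left, and I plan to close it by lowering each higher child's rank by one through a single internal loss-cut, cascading the positions $i+1,\dots,r-1$ down to $i,\dots,r-2$, so that the surviving root children again realise the consecutive profile $0,1,\dots,r-2$ demanded of a binomial tree of rank $r-1$ with losses. Since the lemma asserts only the \emph{existence} of such a subtree, I am free to re-choose $B_p$ and to spend these bookkeeping loss-cuts; the delicate point to check with care is precisely that this reorganisation always restores the consecutive child-rank profile, which I would confirm by a secondary induction on the recursive structure of the binomial tree with losses.
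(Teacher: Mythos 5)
Your overall plan coincides with the paper's own proof: the same induction over the heap's history, the same case split into edge creations and edge destructions, the same dismissal of nonrank edges (guarantees contain rank edges only), and --- for the hard case --- the same key device. The paper also repairs the guarantee at the parent $p$ after losing the rank child at position $i$ by letting every root child of $B_p$ of rank greater than $i$ ``ignore its youngest child'', which is exactly your cascade of single internal loss-cuts restoring the consecutive child-rank profile $0,\dots,r-2$; and it likewise treats the removal of the severed subtree inside a proper ancestor's guarantee as one permitted loss-cut with the rank decrement and loss recorded at $p$. You even make explicit two points the paper glosses over (the severed child possibly lying outside $B_p$, and guarantees persisting when subtrees merely grow).

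There is, however, one genuine gap: you assume that a rank edge always joins two nodes of \emph{equal} rank. That matches the idealized description in the overview, but not this data structure. Link creates a rank edge whenever $rank(s)\le rank(h)$ --- this is precisely what happens in \FindMin\ phase 2, where arbitrary neighboring roots are linked --- so a node $x$ of rank $r$ can acquire a rank child $y$ whose rank is strictly larger than $r$. Your construction ``$B_x\cup B_y$ plus the new edge'' then fails, because $B_y$ is a binomial tree with losses of too large a rank and the result is not a binomial tree of rank $r+1$ with losses. The paper's proof covers exactly this case: it takes $B'_y\subseteq B_y$, the binomial subtree with losses of rank exactly $r$ obtained by keeping only the first $r$ branches of $B_y$, and extends the guarantee of $x$ (and of its rank-path ancestors) by the edge $(x,y)$ together with $B'_y$. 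This one-line repair, in the same spirit as the branch-dropping you already use elsewhere, completes your argument; without it the additive case is not ``routine'' but simply unproved for the links this heap actually performs.
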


\begin{proof}
Let us call the stated subtree $B_x$ a binomial guarantee under $x$.
By induction of the heap history. There are two cases. Either the last operation creates an edge or destroys an edge.
Start with the latter case. 

If the destroyed edge is a nonrank edge, it is not part of a binomial guarantee, no rank has changed, so the lemma remains valid.

If the destroyed edge is a rank edge $(x,y)$ with $x$ closer to the root, just all ancestors of $x$ in the rank path of $x$ are affected. Let $x$ has rank $r_x$.
Before the operation, there was a binomial guarantee $B_x$ under $x$.
Let the subtree under vertex $y$ correspond to a binomial subtree of $B_x$ of rank $i$ with losses.
After the cut, the rank of $x$ changes to $r_x-1$.
A child $c$ of $x$ which corresponds to a binomial subtree of $B_x$ of rank bigger than $i$ (with losses) contains a binomial subtree with losses of one smaller rank (just ignore the youngest child of $c$ in $B_x$ prior cut).
This gives us a new binomial subtree with losses of rank $r_x-1$, so the binomial guarantee under $x$ after the cut.
For an ancestor $a$ in the rank path of $x$, different from $x$, let $r_a$ be its rank.
There existed a binomial guarantee $B_a$ under $a$.
If the $(x,y)$ edge is not part of the subtree, the same subtree $B_a$ remains the binomial guarantee under $a$.
Otherwise, cutting a subtree under $(x,y)$ from $B_a$ creates the binomial guarantee under $a$ after the cut (the rank and loss are updated).

Similarly, adding a nonrank edge does not change ranks and guarantees remain valid.

Adding a rank edge between parent $x$ of rank $r$ and a nonrank root $y$ of a rank $\ge r$ increases the rank of $x$ to $r+1$. Binomial guarantee $B_y$ of $y$ includes binomial subtree $B'_y$ of rank $r$ with losses (include just first $r$ branches).
For an ancestor $a$ of $x$ in the rank path of $x$, including $x$, the binomial guarantee under $a$ is extended by $(x,y)$ and $B'_y$ (the rank and loss of $x$ are updated).
\end{proof}

Thanks to the lemma, $n(R, L)$ must be a binomial tree of rank $R$ with losses.
Our goal is to cut $L$ edges from the binomial tree of rank $R$, not decreasing the rank such that the tree becomes as small as possible.
Cutting children is forbidden as it would decrease the rank.
Cutting a deeper successor than a grandchild is ineffective as cutting the grandchild on the path to the root reduces the tree size more.
The most effective strategy is to cut the grandchildren whose subtrees are the largest.
There are $k$ grandchildren of rank $R-1-k$ so of the size $2^{R-1-k}$.
There are $\sum_{i=1}^{k} i={k+1 \choose 2}$ grandchildren of rank at most $R-1-k$.
If $L<{k+1 \choose 2}$, we know no grandchild of rank $R-2-k$ was cut in the smallest tree of rank $R$ and the total loss $L$ and therefore $n(R,L)>(k+1)\cdot 2^{R-2-k}$.

We are especially interested in the case $n(R,R+1)$, so $R+1<(k^2+k)/2$.
This gives us $k^2+k-2R-2>0$. Higher root of this polynomial is $\sqrt{2R+9/4}-1/2$ so for $k\ge \sqrt{2R+9/4}+1$ we have $n(R,R+1)>(k+1)\cdot 2^{R-2-k}\ge(2+\sqrt{2R+9/4})\cdot 2^{R-3-\sqrt{2R+9/4}}\in \Omega(2^{R(1-o(1))})$.
Our goal is to find a nice function that will still be a lower bound for $n(R, R+1)$.
The numerical evaluation clearly shows $2^{(R-4)/1.2}$ is for all integers smaller than the mentioned bound and therefore $R(n)<4+1.2\log_2 n$.

\section{Violation reductions}
\vbox to 60mm{
\hbox{\kern17mm\pdfximage width 14cm {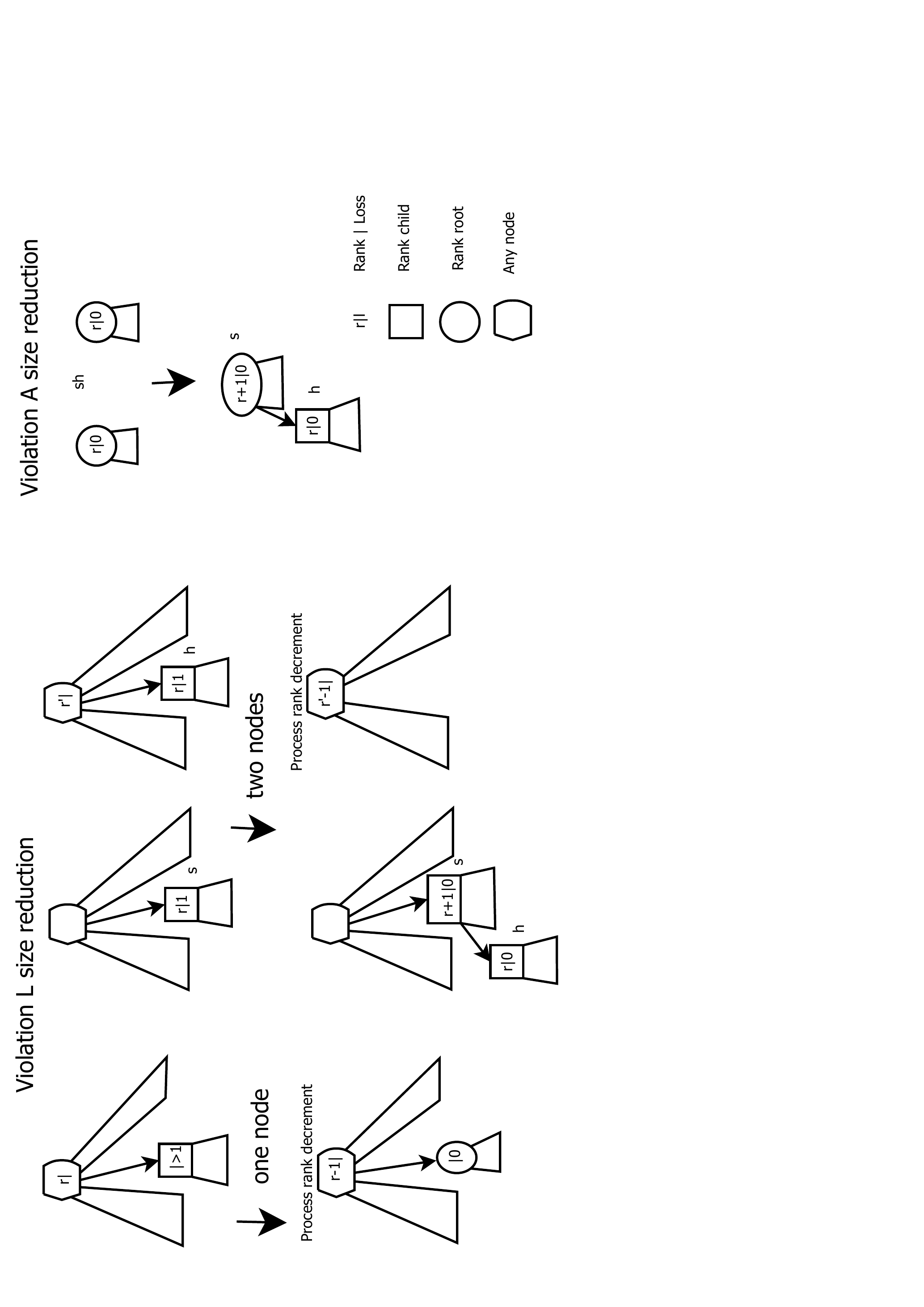}
\rlap{\smash{\pdfsave\pdfsetmatrix{0 -0.56 0.56 0}
\pdfrefximage\pdflastximage}}\pdfrestore
\hss}
\vss
\hbox{Figure 1: Nontrivial cases of reductions to maintain the heap shape}
\kern4mm
}

\begin{table}
 \begin{center}
  \caption{Effect of stacks reductions $\Phi_A=|R_A|+2|C_A|$, $\Phi_L=3|R_L|+4|C_L|$, $\Phi=\Phi_A+\Phi_L$}
	\label{tab:effNoDeffered}
	\begin{tabular}{lrrrrrrrr}
	 Reduction & $|R_A|$ & $|C_A|$ & $\Phi_A$ & $|R_L|$ & $|C_L|$ & $\Phi_L$ & $\Phi$ & $p$\\
	 \hline
	 \hline
	 $|C_A|$ type $\not=\A$& $0$ & $-1$ & $-2$ & $0$ & $0$ & $0$ & $-2$ & $0$ \\
	 \hline
	 $|C_A|$ type $\A$ no match& $+1$ & $-1$ & $-1$ & $0$ & $0$ & $0$& $-1$ & $1$ \\
	 \hline
	 $|C_A|$ type $\A$ matched& $-1$ & $0$ & $-1$ & $0$ & $0$ & $0$ & $-1$ & $2$\\
	 \hline
	 $|C_L|$ type $\not=\LL$ & $0$ & $0$ & $0$ & $0$ & $-1$& $-4$ & $-4$ & $0$ \\
	 \hline
	 $|C_L|$ subtype $\Lii$ & $\le 0$ & $\le +2$ & $\le +3$& $\le 0$ & $\le 0$ & $\le -3$ & $\le -1$ & $\le 3$\\
	 \ - parent $\A$ in $R_A$& $-1$ & $+2$ & $+3$ & $0$ & $\le -2$ & $\le -8$ & $\le -5$ & $3$ \\
	 \ - parent $\A$ in $C_A$& $0$ & $+1$ & $+2$  & $0$ & $\le -2$ & $\le -8$ & $\le -6$ & $1$ \\
	 \ - parent $\Li$ in $R_L$& $0$ & $+1$ & $+2$ & $-1$ & $\le 0$ & $\le -3$ & $\le -1$ & $3$ \\
	 \ - parent $\LL$ in $C_L$& $0$ & $+1$ & $+2$ & $0$ & $\le -1$ & $\le -4$& $\le -2$ & $1$ \\
	 \ - parent $\N$& $0$ & $+1$ & $+2$ & $0$ & $\le -1$ & $\le -4$ & $\le -2$ & $2$ \\
	 \hline
	 $|C_L|$ subtype $\Li$ no match & $0$ & $0$ & $0$ & $+1$ & $-1$& $-1$ & $-1$ & $1$ \\
	 \hline
	 $|C_L|$ subtype $\Li$ matched& $\le 0$ & $\le +1$ & $\le +1$ & $\le -1$ & $\le +1$ & $\le -2$ & $\le -2$ & $\le 3$ \\
	 \ - parent of $h$ $\A$ in $R_A$& $-1$ & $+1$ & $+1$& $-1$ & $-1$ & $-7$ & $-6$ & $3$ \\
	 \ - parent of $h$ $\A$ in $C_A$& $0$ & $0$ & $0$ & $-1$ & $-1$ & $-7$ & $-7$ & $1$ \\
	 \ - parent of $h$ $\Li$ in $R_L$& $0$ & $0$ & $0$ &$-2$ & $+1$ & $-2$ & $-2$ & $3$ \\
	 \ - parent of $h$ $\LL$ in $C_L$& $0$ & $0$ & $0$ & $-1$ & $0$ & $-3$ & $-3$ & $1$ \\
	 \ - parent of $h$ $\N$& $0$ & $0$ & $0$ & $-1$ & $0$ & $-3$ & $-3$ & $2$ \\
	 \hline
	\end{tabular}
 \end{center}
 \vskip 4pt
 Here, $p$ denotes the number of pointer changes not reflected in heap trees during reduction.
 We can see that each stack reduction decrements $\Phi$ by at least 1.
\end{table}

Each node remembers its violation type, which is either $\A$, $\LL$, or $\N$ otherwise.

Rank roots would be maintained as violations of type $\A$. 
Violations of type $\A$ would be pushed to the stack $C_A$ and from the stack popped to the array $R_A$ addressed by rank.

During a $|C_A|$ reduction, if the popped stack node $x$ is already not of type $\A$, the stack item is just discarded.
The stack item is just discarded as well if $R_A$ for $x$'s rank points to $x$.
Another case is $R_A$ for $x$'s rank contains \nill, then the pointer to $x$ is stored there, and the $|C_A|$ reduction step ends.
Last, and the most important case is when $R_A$ for $x$'s rank points to another violation $y$ of the same rank and actual violation type, the $\A$ reduction step could be applied after putting \nill\ to $R_A$ of $x$'s rank. 
Violation reduction step of type $\A$ links nodes $x$, $y$ of the same rank. 
(Their keys are compared, let node $s$ be the one with a smaller key while $h$ the other. We cut $h$ from its parent (if there exists a nonrank edge) and put it as a rank child of $s$. This increases the rank of $s$.
Node $h$ violation type is changed to $\N$, node $s$ is added to $C_A$.)

All nodes with a nonzero loss would be maintained as violations of type $\LL$. 
This type has a subtype $\Li$ for nodes with loss exactly 1 and a subtype $\Lii$ for nodes with loss at least 2.
Violations of type $\LL$ would be pushed to the stack $C_L$, from which the popped violations of subtype $\Li$ will be inserted to the array $R_L$ addressed by rank.
The symbol $|C_L|$ has a weighted meaning. Weight of nodes of subtype $\Lii$ corresponds to the loss of the node, while the weights of other nodes are 1 (including nodes of violation type other than $\LL$).

Similarly, as for $|C_A|$ reduction, when during $|C_L|$ reduction the popped stack node $x$ is already not of type $\LL$ or $R_L$ of $x$'s rank points to $x$, the stack item is just discarded.
Different is the second case when $x$'s subtype is $\Lii$. 
It invokes one node loss reduction, which takes node $x$ with loss at least 2, the reduction makes $x$ a nonrank child of its parent $p$.
This creates a new rank root $x$, so $x$ is put to $C_A$ and the violation type of $x$ is changed to $\A$.
The rank of $p$ is decremented.
Unless the violation type of $p$ is $\N$, $p$ should be removed from the array identified by its type (If there is the \nill\ there, we know $p$ is on the stack).
If $p$ is a rank child, it should be inserted to $C_L$ (if it is not there), and its type changed to $\LL$, its subtype should be set to reflect the loss increase. Total loss was reduced by at least 1.
If $p$ is a rank root, it should be pushed to the stack $C_A$ unless it already resists there.
The third case of $|C_L|$ reduction is for $\Li$ subtype when the array $R_L$ of $x$'s rank contains \nill.
As for $|C_A|$ reduction, the pointer to $x$ is stored in $R_L$ and the $|C_L|$ reduction step ends.
The last case is when $R_L$ for $x$'s rank (for a node of subtype $\Li$) points to another violation $y$ of the same rank and the violation type $\LL$, the reduction step could be applied after putting \nill\ to $R_L$ of $x$'s rank. 
The violation reduction step of type $\LL$, for nodes $x$, $y$ of equal rank, and loss 1, links the two nodes. 
(Their keys are compared, let $h$ and $s$ be the nodes with higher and smaller keys, respectively. 
 Remove $h$ from its parent and link it under $s$ by a rank edge. 
 This reduces the loss of $s$ to 0 and sets the loss of $h$ to 0, so both $s$ and $h$ violation types are changed to $\N$.
 Original parent $p$ of $h$ decrements rank by 1. 
 Unless the violation type of $p$ is $\N$, 
 $p$ should be removed from the array identified by its type 
 (if there is the \nill\ there, we know $p$ is on the stack).
  If $p$ is a rank child, its type should be changed to $\LL$ and $p$ inserted to $C_L$ (if it is not there),
  the subtype of $p$ should be set to reflect the loss increase. Total loss was reduced by at least 1.
  If $p$ is a rank root, it should be inserted to $C_A$ (if it is not there).)
\looseness=1

In Figure 1 you can see the reductions and in Table \ref{tab:effNoDeffered} you can see the effect of different reductions and the definition of $\Phi_A$, $\Phi_L$, and $\Phi$.

Both worst-case stack reduction strategies would empty the stacks after each \DeleteMin\ operation (the amortized version of \DeleteMin\ is always used).
The first worst-case strategy calculates the differences $\Delta \Phi_A$ and $\Delta \Phi_L$ happening from the start of the invoked method (for a method other than \DeleteMin\ they will be bounded by a constant). While any of them is positive and the corresponding stack is nonempty, the corresponding stack size reduction is invoked. As each reduction decreases $\Phi$ by at least 1 and each coordinate change is bounded by a constant, the number of reduction steps is therefore bounded by a constant.
The second worst-case strategy calculates the maximal number of reduction steps required by the first strategy for each method, plans this number of reduction steps, and does the planned reduction steps unless each stack of an unfinished plan is empty.
The situation is simplified as $|C_A|$ reductions do not change $\Phi_L$, so we can make $|C_L|$ reductions first and finish with $|C_A|$ reductions.

Both strategies have in common that either the corresponding stack $C_X$ is empty or the corresponding $\Phi_X$ is at most as big as the last time the stack was empty and the current $n$ did not decrease from $n'$ at that time.
In the former case, all violations of a given type $X$ are addressed in the array by at most one pointer per rank of rank 0 up to $R(n)$, so there are at most $R(n)+1$ violations.
From the form of $\Phi_X$ ($x>0$), we get in the latter case $x(R(n')+1)\ge x|R_X|+(x+1)|C_X|\ge x|R_X|+x|C_X|$ bounding $|R_X|+|C_X|$ by $R(n')+1\le R(n)+1$ as well. Actually, the violation size could be smaller than $|R_X|+|C_X|\le R(n)+1$ in case a violation is pointed more times in $R_X\cup C_X$ or if there is a pointer in $C_X$ to a node which is not a violation of the type $X$.

We will return to the second strategy after presenting the implementation details.

\section{Heap structure}

Heap information contains a pointer to a list of heap tree roots, an array of pointers to four arrays $R[A]$, $R[L]$, $C[A]$, and $C[L]$ and stack pointer indices $P[A]$, $P[L]$ initialised to 0.
The pointed arrays are expected to have sufficient size, but standard worst-case array doubling could be used to solve a problem with maximal rank exceeding the planned value.
The pointed arrays are filled with {\nill}s, the pointer to the list of heap tree roots is initialized with \nill.

Heap node contains an element with key, integer rank, violation subtype (either $\A$, $\Li$, $\Lii$, or $\N$), a pointer to the parent (which is \nill\ for a heap tree root), and pointers \Left\ and \Right\ to siblings. 

The list of heap tree roots uses sibling pointers maintained in the heap nodes.
The parent edge of node $x$ is a rank edge whenever the violation type of $x$ is not $\A$. 
The violation type is maintained implicitly using the violation subtype\footnote{We need loss of a node only in the analysis, the subtype is all we need in the implementation.}.

Left pointers in sibling lists are maintained cyclic (\Left\ of the leftmost node points to the rightmost node), while right pointers are maintained acyclic (\Right\ of the rightmost contains \nill). Except for the first node of a sibling list $x\to\Left\to\Right=x$.
This allows access of both ends in constant time as well as adding or removing of a given node. 

\section{Private methods}
We will describe the public methods using private methods. Their use could be slightly optimized\footnote{In few places $\{\}$ mark duplicated work.}. Decomposition into private methods makes the description easier.

Type($S$): Tabulated conversion of subtype to type by prescription $(\A\to\A),(\N\to\N),(\Li\to\LL),(\Lii\to\LL)$.

SetViolationSubtype($S$) for a node $x$: Unless the original violation subtype $O$ of $x$ is $\N$, the pointer to $x$ is removed from the corresponding array $R[{\rm Type}(O)]$ (addressed by $x$'s rank) if there is anything else than the pointer to $x$, we know $x$ remains in $C[{\rm Type}(O)]$.
Then we change the violation subtype of $x$ to $S$ and insert $x$ to the stack $C[{\rm Type}(S)]$ (unless $S=\N$ or we know the node is already there). Not all combinations of $O$ with $S$ are used.

DecrenmentRank() for a node $x$: 
Unless violation subtype $O$ of $x$ is $\Lii$, let $S$ be obtained from $O$ by tabulated conversion by prescription
 $(\A\to\A),(\N\to\Li),(\Li\to\Lii)$, and SetViolationSubtype($S$) will be called for $x$. 
Then in all cases, the rank of the node $x$ is decremented.

CutFromParent() for a node $c$:
Unless the original violation subtype $O$ of $c$ is $\A$, DecrementRank() is called for the parent of $c$.
In all cases, $\{$parent pointer of $c$ would be set to \nill\ and$\}$ $c$ would be removed from its sibling list.
The calling method should afterward add $c$ to another sibling list. 
The method could be called even when $c$ has no parent.

\begin{table}
 \begin{center}
  \caption{Effect of private methods on violations}
	\label{tab:privblockNoMeld}
	\begin{tabular}{lrrrr}
	 Method & $\Phi_A$ & $\Phi_L$ & $\Phi$ & $p$\\
	 \hline
	 \hline
	 SetViolationSubtype $\A$& $\le +2$ & $\le 0$ & $\le +2$ & $\le 2$ \\
	 \ | from $\A$ & $\le +1$ & $0$ & $\le +1$ & $\le 2$ \\
	 \ | from $\LL$ & $+2$ & $\le 0$ & $\le +2$ & $\le 2$ \\
	 \ | from $\N$ &  $+2$ & $0$ & $+2$ & $1$ \\
	 SetViolationSubtype $\Li$ (only from $\N$) &  $0$ & $+4$ & $+4$ & $1$ \\
	 SetViolationSubtype $\Lii$ (only from $\Li$) & $0$ & $\le +5$ & $\le +5$ & $\le 2$ \\
	 SetViolationSubtype $\N$ & $\le 0$ & $\le 0$ & $\le 0$ & $\le 1$ \\
	 \ | from $\A$ &  $\le 0$ & $0$ & $\le 0$ & $\le 1$ \\
	 \ | from $\LL$ & $0$ & $\le 0$ & $\le 0$ & $\le 1$ \\
	 DecrementRank& $\le +1$ & $\le +5$ & $\le +5$ & $\le 2$ \\
	 \ | $\A$ & $\le +1$ & $0$ & $\le +1$ & $\le 2$ \\
	 \ | $\N$ & $0$ & $+4$ & $+4$ & $1$ \\
	 \ | $\Li$ & $0$ & $\le +5$ & $\le +5$ & $\le 2$ \\
	 \ | $\Lii$ & $0$ & $+4$ & $+4$ & $0$ \\
	 CutFromParent& $\le +1$ & $\le +5$ & $\le +5$ & $\le 2$ \\
	 Link & $\le +1$ & $\le +5$ & $\le +5$ & $\le 4$ \\
	 \ + CutFromParent for $h$& $\le +1$ & $\le +5$ & $\le +5$ & $\le 2$ \\
	 \ + SetViolationSubtype($H$) of $h$& $\le 0$ & $\le 0$ & $\le 0$ & $\le 1$ \\
	 \ + SetViolationSubtype($S$) of $s$& $\le +1$ & $\le 0$ & $\le +1$ & $\le 2$ \\
	 Link of rank roots of the same rank & $\le +1$ & $0$ & $\le +1$ & $\le 3$ \\
	 Link of rank roots of different rank & $\le +1$ & $0$ & $\le +1$ & $\le 3$ \\
	 Link of $\Li$ nodes of the same rank & $\le +1$ & $\le +5$ & $\le +5$ & $\le 4$ \\
	 \hline
	\end{tabular}
 \end{center}
 \vskip 4pt
 Here, $p$ again denotes the number of pointer changes not reflected in heap trees.
\end{table}

Link($x$,$y$):
There could be asserted that the violation subtype of $x$ and $y$ should be the same and from $\{\A,\Li\}$, the ranks of $x$ and $y$ should be equal in the case $\Li$.
The keys of $x$ and $y$ are compared, let node $s$ be the one with a smaller key while $h$ the other. 
CutFromParent() for $h$ is called and $h$ is added as the leftmost child of $s$, the parent pointer of $h$ is set to $s$.
If $rank(s)\le rank(h)$, let $H=\N$, otherwise let $H=\A$. 
SetViolationSubtype($H$) for $h$ is called unless the violation subtype of $h$ is $H$.
If $H=\N$, let $S$ be obtained from the subtype of $s$ by tabulated conversion by prescription
 $(\A\to\A),(\Li\to\N)$, SetViolationSubtype($S$) is called for $s$ and the rank of $s$ is incremented.

Pushing and Poping on the stack $C[X]$ using the stack pointer index $P[X]$ is standard (left as an exercise).

Let us repeat the nontrivial cases of different stack size reductions using private methods.
$|LC|$ reduction for $x$ of subtype $\Lii$ just calls SetViolationSubtype($\A$) for $x$ and Dec\-re\-ment\-Rank() for its parent.
$|LC|$ reduction for $x$, $y$ of subtype $\Li$ and the same rank just calls Link($x$,$y$).
$|AC|$ reduction for $x$, $y$ of subtype $\A$ and the same rank just calls Link($x$,$y$).

\section{Public methods}
\begin{table}
 \begin{center}
  \caption{Effect of public methods on violations}
	\label{tab:pubmethNoMeld}
	\begin{tabular}{lrrrr}
	 Method & $\Phi_A$ & $\Phi_L$ & $\Phi$ & $p$\\
	 \hline
	 \hline
	 \Insert & $\le +3$ & $0$ & $\le +3$ & $\le 3$\\
	 \ + \FindMin\ phase 0 &  $+2$ & $0$ & $+2$ & $1$\\
	 \ + \FindMin\ phase 2 & $\le +1$ & $0$ & $\le +1$ & $\le 2$\\
	 \hline
	 \FindMin & $0$ & $0$ & $0$ & $0$ \\
	 \hline
	 \DeleteMin \\
	 \ + SetViolationSubtype($\N$) of $\rho$ & $\le 0$ & $\le 0$ & $\le 0$ & $\le 1$ \\
	 \ + \FindMin\ phase 0 & $\le 2R(n)+2$ & $0$ & $\le 2R(n)+2$& $\le R(n)+1$ \\
   \ + \FindMin\ phase 2 & $\le 1$ & $0$ & $0$& $\le \lceil 3(R(n)+1)/2\rceil$ \\
	 \hline
	 \Decrement & $\le +4$ & $\le +5$ & $\le +8$ & $\le 6$ \\
	 \ + CutFromParent() for $x$ & $\le +1$ & $\le +5$ & $\le +5$ & $\le 2$ \\
	 \ + \FindMin\ phase 0 & $\le +2$ & $\le 0$ & $\le +2$ & $\le 2$\\
	 \ + \FindMin\ phase 2 & $\le +1$ & $0$ & $\le +1$ & $\le 2$\\
	 \hline
	\end{tabular}
 \end{center}
 \vskip 4pt
 Here, $p$ again denotes the number of pointer changes not reflected in heap trees. 

 \DeleteMin\ requires at most $2R(n)+3\le 2.4\log_2(n)+11$ stack size reductions in the amortized sense. 
  There is at most 3 pointer overhead per stack size reduction, so altogether it would generate at most 
	$\lceil (17R(n)+25)/2\rceil\le 10.2\log_2 n+47$ pointer overhead in the amortized sense.
  If $\Phi^0$ be potential before and $\Phi^E$ after \DeleteMin, we should include the difference into account as well. 
	But $\Phi^0\le 4R(n)+4$ and $\Phi^E\ge 0$. Therefore, we have the worst case bound $6R(n)+7\le 7.2\log_2(n)+31$ stack reductions. 
	In the worst case, there will be at most $\lceil (41R(n)+49)/2\rceil\le 24.6\log_2 n+107$ pointer overhead.
\end{table}

\Insert($k$)\ method creates a new node $x$ with violation subtype $\N$, key $k$, rank 0, $\{$no parent$\}$ and no child. 
It adds $x$ as a new root to the list of heap tree roots and it invokes \FindMin.
It returns $x$ for future references. 

\FindMin\ method in phase 0 traverses the list of roots and it sets their parent pointers to \nill, it calls SetViolationSubtype($\A$) for each of them which does not have subtype $\A$. In phase 1, the stack size reductions are made.
In the amortized variant, they are performed until the stacks are empty.
In the worst-case variant, the reductions are applied according to the strategy
(the 1st strategy makes $\Delta\Phi_X$ nonpositive or stack $C[X]$ empty, the 2nd strategy makes enough reductions to be sure the same holds)\footnote{The analysis would give the same asymptotic complexity if phase 1 is omitted, but making $|C_A|$ reductions early is an attempt to create rank edges directly rather than converting created nonrank edges by $|C_A|$ reductions later. In case only {\Insert} methods are invoked, it reduces the number of comparisons by a factor of about 3/4.}.
In phase 2, it traverses the heap tree roots left wise, linking two neighboring roots interlaced with steps to the left in the circular list (to link the roots as even as possible).
We finish when only one tree remains. Its root points to the minimum and it will be returned.
In phase 3, which is last, the stack size reductions are made again.
The amortized variant works until the stacks are empty, while worst-case variants according to the strategy to reflect the real $\Delta\Phi_A$ change during the second phase (or the maximal possible $\Delta\Phi_A$ change)\footnote{Skipping phase 3 would complicate the arguments, but doing all reductions in phase 1 is an alternative.}.

\DeleteMin\ method implements only the amortized variant, which has a guaranteed worst-case time $O(\log n)$, so no maintenance of $\Delta\Phi$ coordinates is needed. 
Let $\rho$ be the only tree root. It updates the pointer to the list of roots to point to the leftmost child of $\rho$. 
SetViolationSubtype($\N$) is called for $\rho$. At the end, \FindMin\ is called and $\rho$ is discarded\footnote{After phase 1 of FindMin the stacks are empty and no pointer to $\rho$ in the structure remains.}. 

\Decrement($x$, $k$) method calls CutFromParent() for $x$ and $x$ is added as a new root to the list of heap tree roots.
Then in all cases, it updates the key at node $x$ to $k$.
It invokes \FindMin\ at the end.

\section{Second worst-case strategy}
We can see the effect of public methods on $\Phi$ coordinates and pointer overhead in the table \ref{tab:pubmethNoMeld}. 
The \DeleteMin\ analysis is part of the table.

For the amortized version of \Decrement, we got stack-size reduction requirements by at most $8$ and $6$ in additional pointer overhead, and for the amortized version of \Insert\ we got stack-size reduction requirements by at most $3$ and $3$ in additional pointer overhead. Including stack-size reductions, this makes the amortized pointer overhead at most $30$ per \Decrement\ and $12$ per \Insert.

Only \Decrement\ can increase the $\Phi_L$. It is increased by at most $5$. Each $|C_L|$ reduction reduces $\Phi_L$ by at least 1, so $5$ $|C_L|$ reduction steps when \Decrement\ calls \FindMin\ are sufficient to maintain $\Phi_L$ in bounds.
The largest increase of $\Phi_A$ per $|C_L|$ reduction is by $3$, so $5$ reduction steps by the second strategy increase $\Phi_A$ by at most $15$. Therefore, $\Phi_A$ could increase by $18$ before $|C_A|$ reductions start at \FindMin\ phase 1. Each reduction decreases $\Phi_A$ by at least 1, so $18$ $|C_A|$ reductions in phase 1 are sufficient. At most $1$ additional $|C_A|$ reduction should be performed at phase 3. The stack size reductions generate pointer maintenance overhead at most $53$, so the pointer maintenance overhead for \Decrement\ according to the second worst-case strategy is at most $59$.

A careful look at the first strategy will show the maximal increase of $\Phi_A$ would be $5$ obtained by the first reduction increasing $\Phi_A$ by $2$ and the second by $3$, while $\Phi_L$ reaches a smaller value than before the \Decrement\ started.
\Decrement\ increases $\Phi_A$ by at most $7$ before $|C_A|$ reductions start at \FindMin\ phase 1 (CutFromParent does not increase $\Phi_A$ and $\Phi_L$ simultaneously) so at most $7$ $|C_A|$ reductions are performed in phase 1 of \FindMin\ according to the first strategy, and at most $1$ $|C_A|$ reduction is performed at phase 3. The stack size reductions generate the pointer maintenance overhead at most $22$, so the pointer maintenance overhead for \Decrement\ according to the first worst-case strategy is at most{\penalty1000} $28$.\footnote{Third worst-case strategy would calculate $\Phi_L$ reduction made and stop at $\ge 5$ , then $7+1$ $|C_A|$ reductions suffice.}

The situation with stack size reductions is much easier for \Insert.
There are no $|C_L|$ reductions at all, and at most $2$ $|C_A|$ reductions in phase 1 of \FindMin, and at most $1$ $|C_A|$ reduction in phase 3. This generates (in both worst-case strategies) at most $6$ pointer maintenance overhead, so the worst-case \Insert\ has at most $9$ pointer maintenance overhead in total.

Let us repeat at the end that the second worst case strategy plans for \FindMin\ called from \Decrement\ $5$ $|C_L|$ and $18$ $|C_A|$ reductions to phase 1, and $1$ $|C_A|$ reduction to phase 3.
It plans for \FindMin\ called from \Insert\ $2$ $|C_A|$ reductions to phase 1, and $1$ $|C_A|$ reduction to phase 3.


\bibliography{fdk}

\begin{thebibliography}{1}

\bibitem{ffhwce}
Vladan Majerech.
\newblock Fast fibonacci heaps with worst case extensions, 2019.
\newblock \href {http://arxiv.org/abs/1911.11637} {\path{arXiv:1911.11637}}.

\end{thebibliography}
\end{document}